\documentclass[10pt]{article}
\usepackage{amsmath}
\usepackage{amssymb, amscd, amsthm}
\usepackage[all]{xy}
\usepackage[dvips]{graphicx}
\usepackage{verbatim}
\usepackage[perpage,symbol*]{footmisc}
\setlength{\hoffset}{-2cm} \setlength{\textwidth}{16cm}

\newtheorem{theorem}{Theorem}

\newtheorem{lemma}{\textbf{Lemma}}[section]
\newtheorem{remark}{\textbf{Remark}}[section]
\newtheorem{corollary}{\textbf{Corollary}}[section]

\newtheorem{example}{\textbf{Example}}[section]

\usepackage[justification=centering]{caption}
\usepackage{longtable}
\newcommand{\tabincell}[2]{\begin{tabular}{@{}#1@{}}#2\end{tabular}}
\usepackage{multirow}
\usepackage{slashbox}

\newcommand{\F}{\mathbb{F}}

\begin{document}

\baselineskip 17pt
\title{\Large\bf New MDS Euclidean Self-orthogonal Codes}

\author{\large  Xiaolei Fang \quad\quad Meiqing Liu \quad\quad Jinquan Luo*}\footnotetext{The authors are with School of Mathematics
and Statistics \& Hubei Key Laboratory of Mathematical Sciences, Central China Normal University, Wuhan China 430079.\\
 E-mail: fangxiaolei@mails.ccnu.edu.cn(X.Fang), 15732155720@163.com(M.Liu), luojinquan@mail.ccnu.edu.cn(J.Luo)}

\date{}
\maketitle

{\bf Abstract}:
   In this paper, a criterion of MDS Euclidean self-orthogonal codes is presented. New MDS Euclidean self-dual codes and self-orthogonal codes are constructed via this criterion. In particular, among our constructions, for large square $q$, about $\frac{1}{8}\cdot q$ new MDS Euclidean (almost) self-dual codes over $\F_q$ can be produced. Moreover, we can construct about $\frac{1}{4}\cdot q$ new MDS Euclidean self-orthogonal codes with different even lengths $n$ with dimension $\frac{n}{2}-1$.

{\bf Key words}: MDS code, Euclidean (almost) self-dual code, Euclidean self-orthogonal code, Generalized Reed-Solomon(GRS) code.
\section{Introduction}

 \quad\; Let $q$ be a prime power and $\mathbb{F}_q$ be a finite field of cardinality $q$. A linear code $C$ over $\F_q$, denote by $[n,k,d]_q$, is a
linear $\F_q$-subspace of $\F_q^n$ with dimension $k$ and minimal (Hamming) distance $d$. If the parameters can reach the Singleton bound, that is
$n+1=k+d$, then we call $C$ a maximum distance separable(MDS) code. Denote the Euclidean dual code of $C$ by $C^{\perp}$. If $C\subseteq C^\perp$,
we call $C$ a Euclidean self-orthogonal code. If $C=C^\perp$, then $C$ is called a Euclidean self-dual code.

Both MDS codes and Euclidean self-dual codes have theoretical and practical significance. The study on MDS Euclidean self-dual codes have attracted a
lot of interest, especially for the constructions of MDS Euclidean self-dual codes. Such codes can be constructed in various ways, which
mainly are: (1). orthogonal designs, see [\ref{GK}, \ref{HK1}, \ref{HK2}]; (2). building up technique, see [\ref{KL1}, \ref{KL2}]; (3). constacyclic
codes, see [\ref{KZT}, \ref{TW}, \ref{YC}]; (4). (extended) GRS codes, see [\ref{FF3}, \ref{GKL}, \ref{JX2}, \ref{LLL}, \ref{TW}, \ref{Yan}, \ref{ZF}].

It is noted that the parameters of MDS Euclidean self-dual codes can be completely determined by the code length $n$. For the finite field of even
characteristic, Grassl and Gulliver completely determined the MDS Euclidean self-dual codes of length $n$ for all $1\leq n\leq q$ in [\ref{GG}]. In
[\ref{GUE}, \ref{KZT}, \ref{TW}], the authors obtained some new MDS Euclidean self-dual codes through cyclic, constantcyclic and negacyclic codes.
In [\ref{JX2}], Jin and Xing firstly presented a general and efficient method to construct MDS Euclidean self-dual codes by utilizing GRS codes.
Afterwards, many researchers started to construct MDS Euclidean self-dual codes via GRS codes. In [\ref{ZF}], Zhang and Feng stated that when $q\equiv3\,(\mathrm{mod}\,4)$ and $n\equiv2\,(\mathrm{mod}\,4)$, MDS Euclidean self-dual code of length $n$ does not exist. We list all the known results on the systematic constructions of MDS Euclidean self-dual codes, which are depicted in Table 1.

\begin{center}
\begin{longtable}{|c|c|c|}  
\caption{Known systematic construction on MDS Euclidean self-dual codes of length
$n$\\ ($\eta$ is the quadratic character of $\mathbb{F}_{q}$) } \\ \hline
$q$ & $n$ even & Reference\\  \hline
$q$ even  &  $n \leq q$   & [\ref{GG}] \\ \hline
$q$ odd & $n=q+1$ & [\ref{GG}]\\ \hline $q$ odd & $(n-1)|(q -1)$, $\eta(1 - n) = 1$ &   [\ref{Yan}] \\ \hline
$q$ odd & $(n-2)|(q - 1)$, $\eta(2 - n) = 1$ &   [\ref{Yan}]\\ \hline
$q = r^{s}\equiv3\,(\mathrm{mod}\,4)$ &  $n-1= p^m \mid(q-1)$, prime $p\equiv3\,(\mathrm{mod}\,4)$ and  $m$ odd &  [\ref{GUE}]\\ \hline
$q = r^{s}$, $r\equiv1\,(\mathrm{mod}\,4)$, $s$ odd &  $n-1= p^m\mid (q-1)$, $m$ odd and prime $p\equiv1\,(\mathrm{mod}\,4)$  &  [\ref{GUE}]\\ \hline
$q = r^{s}$ , $r$ odd, $s\geq 2$ & $n = lr$,  $l$ even and $2l|(r - 1)$ &   [\ref{Yan}] \\ \hline

$q = r^{s}$ , $r$ odd, $s \geq 2$ & $n = lr$,  $l$ even , $(l - 1)|(r - 1)$ and $\eta(1 - l)=1$ &   [\ref{Yan}] \\ \hline

$q = r^{s}$ , $r$ odd, $s \geq 2$ & $n = lr + 1$, $l$ odd , $l|(r - 1)$ and $\eta(l) = 1$  &   [\ref{Yan}] \\ \hline
 $q = r^{s}$ , $r$ odd, $s \geq 2$ & $n = lr + 1$, $l$ odd , $(l - 1)|(r - 1)$ and $\eta(l - 1) = \eta(-1) = 1$ &  [\ref{Yan}] \\ \hline

$q=r^2$  & $n \leq r$  & [\ref{JX2}] \\ \hline
$q = r^2, r\equiv3\,(\mathrm{mod}\,4)$  &  $n= 2tr$ for any $t \leq \frac{r - 1}{2}$ &   [\ref{JX2}]\\ \hline

$q = r^2$, $r$ odd & $n = tr$, $t$ even and $1 \leq t \leq r$ &   [\ref{Yan}] \\ \hline

 $q = r^2$, $r$ odd & $n = tr + 1$,  $t$ odd and $1 \leq t \leq r$ &   [\ref{Yan}] \\ \hline

$q \equiv1\,(\mathrm{mod}\,4)$ &  $ n|(q - 1), n < q - 1$ &   [\ref{Yan}] \\ \hline
$q\equiv1\,(\mathrm{mod}\,4)$ &  $4^{n}\cdot n^{2} \leq q$ &  [\ref{JX2}]\\ \hline

  $q = p^k $, odd prime $p$ & $n= p^r+1$, $r|k$ &   [\ref{Yan}] \\ \hline
$q = p^k $, odd prime $p$ & $n= 2p^e$, $1 \leq e < k$, $\eta(-1) = 1$&  [\ref{Yan}] \\ \hline
$q=r^2$, $r$ odd & $n=tm$, $1\leq t \leq \frac{r-1}{\gcd(r-1,m)}$, $\frac{q-1}{m}$ even &  [\ref{LLL}] \\ \hline
$q=r^2$, $r$ odd & $n=tm+1$, $tm$ odd, $1\leq t \leq \frac{r-1}{\gcd(r-1,m)}$ and $m|(q-1)$  & [\ref{LLL}]\\ \hline
$q=r^2$, $r$ odd & $n=tm+2$, $tm$ even, $1\leq t \leq \frac{r-1}{\gcd(r-1,m)}$ and $m|(q-1)$   &   [\ref{LLL}]\\\hline
$q=r^2$, $r$ odd & $n=tm$, $1\leq t \leq \frac{r+1}{\gcd(r+1,m)}$, $\frac{q-1}{m}$ even & [\ref{FLLL}] \\ \hline

$q=r^2$, $r$ odd  &\tabincell{c}{$n=tm+2$, $tm$ even(except when $t$ is even, $m$ is even\\
 and $r\equiv1\,(\mathrm{mod}\,4)$), $1\leq t \leq \frac{r+1}{\gcd(r+1,m)}$ and $m|(q-1)$}   &  [\ref{FLLL}]  \\\hline
 $q=r^2$, $r$ odd & $n=tm+1$, $tm$ odd, $2\leq t \leq \frac{r+1}{2\gcd(r+1,m)}$ and $m|(q-1)$  & [\ref{FLLL}] \\ \hline
   $q=r^2$, $r$ odd & \tabincell{c}{$n=tm$, $1\leq t \leq \frac{s(r-1)}{\gcd(s(r-1),m)}$, $s$ even, $s|m$,\\ $\frac{r+1}{s}$ even and $\frac{q-1}{m}$ even}  & [\ref{FLLL}] \\ \hline
      $q=r^2$, $r$ odd & \tabincell{c}{$n=tm+2$, $1\leq t \leq \frac{s(r-1)}{\gcd(s(r-1),m)}$, $s$ even, $s|m$,\\ $s\mid r+1$ and $m|(q-1)$}  & [\ref{FLLL}] \\ \hline
$q=p^{m}$, $m$ even, odd prime $p$ & $n=2tr^l$ with $r=p^s$, $s\mid\frac{m}{2}$, $0\leq l\leq \frac{m}{s}$ and $1\leq t\leq\frac{r-1}{2}$ & [\ref{FF3}]\\ \hline
$q=p^{m}$, $m$ even, odd prime $p$ &\tabincell{c}{$n=(2t+1)r^l+1$ with $r=p^s$, $s\mid\frac{m}{2}$, $0\leq l<\frac{m}{s}$ \\and $0\leq t\leq\frac{r-1}{2}$ or $l=\frac{m}{s}$, $t=0$} & [\ref{FF3}]\\ \hline
$q=p^m\equiv1\,(\mathrm{mod}\,4)$ & $n= p^l+1$ with $0\leq l\leq m$ &   [\ref{FF3}] \\ \hline
\end{longtable}
 \end{center}

For large $q$, the number of different lengths for MDS self-dual codes in Table 1 is much smaller than $\frac{q+1}{2}$.
However, in this paper, we produce
approximately $\frac{q}{8}$ many $q$-ary MDS Euclidean self-dual codes with different lengths. In particular, for odd length, there does not exist self-dual codes. The extreme case is almost self-dual. Self-orthogonality is a nature generalization of self-duality. A criterion of MDS Euclidean self-orthogonal via (extended) GRS codes is presented and applied to construct new MDS Euclidean self-orthogonal codes in this paper.

The rest of this paper is organized as follows. In Section 2, we introduce some basic notations and useful results on self-dual codes, self-orthogonal codes and GRS codes. A criterion of MDS self-orthogonal codes is also presented. In Section 3, we give constructions of new MDS Euclidean self-dual codes utilizing GRS codes. In Section 4, some new MDS self-orthogonal codes are constructed by applying the criterion of MDS self-orthogonal codes. We conclude the results in Section 5.

\section{Preliminaries}
 \quad\; In this section, we introduce some basic notations and useful results on self-dual codes, self-orthogonal codes and GRS codes.

For two vectors $\overrightarrow{b}=(b_1,b_2,\ldots,b_n)$ and $\overrightarrow{c}=(c_1,c_2,\ldots,c_n)$ of $\F_q^n$, the Euclidean inner product is $$\left\langle\overrightarrow{b},\overrightarrow{c}\right\rangle=b_1c_1+b_2c_2+\cdots+b_nc_n.$$
Let $C$ be a linear code over $\F_q$. The Euclidean dual of $C$ is defined as follows:
$$C^\perp=\left\{\overrightarrow{u}\in\F_q^n|\,\text{for\, all}\,\, \overrightarrow{c}\in C, \langle\overrightarrow{u},\overrightarrow{c}\rangle=0\right\}.$$
If $C\subseteq C^\perp$, we call $C$ a Euclidean self-orthogonal code. If $C=C^\perp$, then $C$ is called a Euclidean self-dual code. It is noted that if
$C$ is Euclidean self-dual, the length of $C$ is even. For odd length, when $C\subseteq C^\perp$ and $\dim(C^\perp)=\dim(C)+1$, we call $C$ almost
self-dual.

For $1<n<q$, we choose two $n$-tuples $\overrightarrow{v}=(v_{1},v_{2},\ldots,v_{n})$ and $\overrightarrow{a}=(a_{1},a_{2},\ldots,a_{n})$, where
$v_{i}\in\mathbb{F}_{q}^{*}$, $1\leq i\leq n$ ($v_{i}$ may not be distinct) and $a_{i}$, $1\leq i\leq n$ are distinct elements in $\mathbb{F}_{q}$.
Then the GRS code of length $n$ associated with $\overrightarrow{v}$ and $\overrightarrow{a}$ is defined below:
\begin{equation}\label{def GRS}
\mathbf{GRS}_{k}(\overrightarrow{a},\overrightarrow{v})=\left\{(v_{1}f(a_{1}),\ldots,v_{n}f(a_{n})):f(x)\in\mathbb{F}_{q}[x],\mathrm{deg}(f(x))\leq k-1\right\},
\end{equation}
where $1\leq k\leq n$.

It is well-known that the code $\mathbf{GRS}_{k}(\overrightarrow{a},\overrightarrow{v})$ is a $q$-ary $[n,k]$ MDS code and its dual code is also MDS
[\ref{MS}, Chapter 11].

Moreover, the extended GRS code associated with $\overrightarrow{v}$ and $\overrightarrow{a}$ is defined by:
\begin{equation}\label{def extended GRS}
\mathbf{GRS}_{k}(\overrightarrow{a},\overrightarrow{v},\infty)=\left\{(v_{1}f(a_{1}),\ldots,v_{n}f(a_{n}),f_{k-1}):f(x)\in\mathbb{F}_{q}[x],
\mathrm{deg}(f(x))\leq k-1\right\},
\end{equation}
where $1\leq k\leq n$ and $f_{k-1}$ is the coefficient of $x^{k-1}$ in $f(x)$.

It is also well known that the code $\mathbf{GRS}_{k}(\overrightarrow{a},\overrightarrow{v},\infty)$ is a $q$-ary $[n+1,k]$ MDS code and its dual is also
MDS [\ref{MS}, Chapter 11].

We define
\begin{equation*}\label{def GRS}
L_{\overrightarrow{a}}(a_{i})=\prod_{1\leq j\leq n,j\neq i}(a_{i}-a_{j})
\end{equation*}
and let $QR_{q}$ be the set of nonzero squares of $\mathbb{F}_{q}$.
We give the following lemmas, which are useful in the proof of the main results.

\begin{lemma}([\ref{LXW}], Lemma 5)\label{y3}
Let $a_1,a_2,\cdots,a_n$ be distinct elements of $\mathbb{F}_{q}$. Then we have
\begin{equation*}
\sum_{i=1}^na_i^m\cdot L_{\overrightarrow{a}}(a_{i})^{-1}=
\begin{cases}
0, \text{ $0\leq m\leq n-2$;}\\
1, \text{ $ m=n-1$.}
\end{cases}
\end{equation*}
\end{lemma}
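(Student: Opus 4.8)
The plan is to derive this identity from Lagrange interpolation applied to the monomial $f(x)=x^{m}$ at the node set $\{a_{1},\ldots,a_{n}\}$. Since $\deg f = m \le n-1$ and the $a_{i}$ are distinct, the interpolation formula
\begin{equation*}
f(x)=\sum_{i=1}^{n}f(a_{i})\prod_{\substack{1\leq j\leq n\\ j\neq i}}\frac{x-a_{j}}{a_{i}-a_{j}}
\end{equation*}
holds exactly as an identity in $\mathbb{F}_{q}[x]$: both sides have degree at most $n-1$ and agree at the $n$ distinct points $a_{1},\ldots,a_{n}$, hence they coincide.

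Next I would compare the coefficient of $x^{n-1}$ on both sides. On the right-hand side, $\prod_{j\neq i}(x-a_{j})$ is monic of degree $n-1$, so its $x^{n-1}$-coefficient is $1$; dividing by $\prod_{j\neq i}(a_{i}-a_{j})=L_{\overrightarrow{a}}(a_{i})$ and multiplying by $f(a_{i})=a_{i}^{m}$, the $x^{n-1}$-coefficient of the $i$-th summand is exactly $a_{i}^{m}L_{\overrightarrow{a}}(a_{i})^{-1}$. Therefore the $x^{n-1}$-coefficient of the right-hand side equals $\sum_{i=1}^{n}a_{i}^{m}L_{\overrightarrow{a}}(a_{i})^{-1}$. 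On the left-hand side, the $x^{n-1}$-coefficient of $x^{m}$ is $1$ when $m=n-1$ and $0$ when $0\leq m\leq n-2$. Equating the two expressions yields the claimed formula.

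An essentially equivalent route is to use the partial fraction decomposition of $\dfrac{x^{m}}{\prod_{j=1}^{n}(x-a_{j})}$: for $m\leq n-1$ this is a proper rational function whose decomposition is $\sum_{i=1}^{n}\dfrac{a_{i}^{m}L_{\overrightarrow{a}}(a_{i})^{-1}}{x-a_{i}}$, and clearing denominators then reading off the coefficient of $x^{n-1}$ in the numerator (equivalently, multiplying by $x$ and passing to the limit $x\to\infty$) again isolates $\sum_{i}a_{i}^{m}L_{\overrightarrow{a}}(a_{i})^{-1}$ as $1$ if $m=n-1$ and $0$ otherwise. One could also argue by induction on $n$, or by writing the sum as a quotient of a (generalized) Vandermonde determinant over the standard one, but the interpolation argument is the shortest.

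There is no substantial obstacle here; the only points requiring care are the degree bookkeeping — checking that Lagrange interpolation is exact throughout the range $0\leq m\leq n-1$, so that reading off the $x^{n-1}$-coefficient is legitimate — and noting that the convention $0^{0}=1$ keeps the $m=0$ case consistent even when one of the $a_{i}$ vanishes, since it matches interpolation of the constant polynomial $1$.
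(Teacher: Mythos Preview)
Your argument via Lagrange interpolation is correct and cleanly handles both cases at once: interpolating $x^{m}$ with $m\le n-1$ at the $n$ distinct nodes and reading off the $x^{n-1}$-coefficient immediately gives the stated values. The degree bookkeeping and the remark on $0^{0}=1$ are the only subtleties, and you have addressed them.

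As for comparison with the paper: the paper does not actually prove this lemma. It is quoted verbatim from [\ref{LXW}, Lemma~5] and used as a black box (in the proofs of Lemmas~\ref{GRS} and~\ref{eGRS}). So your proposal supplies a self-contained proof where the paper offers only a citation. The interpolation argument you give is in fact the standard one and is essentially what appears in the cited source; your alternative partial-fraction route is equivalent, and the determinant/induction variants you mention would also work but are less direct.
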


A criterion of MDS Euclidean self-orthogonal codes is presented in the following lemmas, which can be regarded as generalizations of Corollary 2.4
in [\ref{JX2}] and Lemma 2 in [\ref{Yan}], and be applied to construct new MDS Euclidean self-orthogonal codes.
\begin{lemma}\label{GRS}
Assume $1\leq k\leq \lfloor\frac{n}{2}\rfloor$. The code $\mathbf{GRS}_k(\overrightarrow{a},\overrightarrow{v})$ is Euclidean self-orthogonal if and only if there exists a nonzero polynomial $\lambda(x)=\lambda_{n-2k}x^{n-2k}+\cdots+\lambda_1 x+\lambda_0\in \mathbb{F}_{q}[x]$ such that
$v_i^2=\lambda(a_i)L_{\overrightarrow{a}}(a_i)^{-1}\neq0$ for any $1\leq i\leq n$.
\end{lemma}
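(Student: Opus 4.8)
The plan is to translate Euclidean self-orthogonality of $\mathbf{GRS}_k(\oa,\ov)$ into a system of power-moment identities and then to read those identities, through Lemma \ref{y3}, as a degree bound on a single Lagrange interpolation polynomial. I would start from the standard generator matrix of $\mathbf{GRS}_k(\oa,\ov)$ whose $s$-th row is $(v_1a_1^{s},\dots,v_na_n^{s})$ for $0\le s\le k-1$. Since self-orthogonality can be checked on a generating set, it is equivalent to the vanishing of the inner product of every pair of rows, i.e. to
$$\mu_m:=\sum_{i=1}^{n}v_i^2a_i^{m}=0\qquad\text{for all }0\le m\le 2k-2$$
(each such $m$ being realized as $s+t$ with $0\le s,t\le k-1$). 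Let $\lambda(x)$ be the unique polynomial of degree $\le n-1$ with $\lambda(a_i)=v_i^2L_{\oa}(a_i)$ for all $i$; because $v_i\in\F_q^{*}$ and the $a_i$ are distinct, every value $\lambda(a_i)$ is nonzero, so $\lambda\ne0$ and automatically $v_i^2=\lambda(a_i)L_{\oa}(a_i)^{-1}\ne0$. Hence the whole statement reduces to the claim: $\mu_m=0$ for $0\le m\le 2k-2$ if and only if $\deg\lambda\le n-2k$.

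For the "$\Leftarrow$'' direction I would simply compute, for $0\le s,t\le k-1$,
$$\sum_{i=1}^{n}v_i^2a_i^{s+t}=\sum_{i=1}^{n}\big(\lambda(x)x^{s+t}\big)\big|_{x=a_i}\,L_{\oa}(a_i)^{-1},$$
and note that $\deg\big(\lambda(x)x^{s+t}\big)\le(n-2k)+2(k-1)=n-2$, so by Lemma \ref{y3} (applied coefficient-wise to this polynomial) the sum is $0$; thus $\mathbf{GRS}_k(\oa,\ov)$ is self-orthogonal. For "$\Rightarrow$'' the key remark, again from Lemma \ref{y3}, is that for any polynomial $P(x)$ of degree $\le n-1$ one has $\sum_{i=1}^{n}P(a_i)L_{\oa}(a_i)^{-1}$ equal to the coefficient of $x^{n-1}$ in $P$. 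Assume for contradiction that $d:=\deg\lambda\ge n-2k+1$ and put $m:=n-1-d$, so that $0\le m\le 2k-2$; then $\lambda(x)x^{m}$ has degree exactly $n-1$ and its $x^{n-1}$-coefficient is the nonzero leading coefficient of $\lambda$, whence $\mu_m\ne0$, contradicting self-orthogonality. Therefore $\deg\lambda\le n-2k$, and writing $\lambda(x)=\lambda_{n-2k}x^{n-2k}+\cdots+\lambda_1x+\lambda_0$ finishes the argument.

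The linear-algebra reduction to the moment equations and the degree arithmetic are routine; the one point requiring care — and the only place the hypothesis $k\le\lfloor\frac{n}{2}\rfloor$ is genuinely used — is keeping $\deg(\lambda(x)x^{s+t})\le n-2$ in the "$\Leftarrow$'' direction and $\deg(\lambda(x)x^{m})=n-1$ in the "$\Rightarrow$'' direction, which is exactly what pins the bound down to $\deg\lambda\le n-2k$ on the nose rather than something weaker. As an alternative for "$\Rightarrow$'' one could instead expand $\lambda(x)=\sum_{i=1}^{n}v_i^2\prod_{j\ne i}(x-a_j)$ explicitly and observe that its top $2k-1$ coefficients are triangular linear expressions in $\mu_0,\dots,\mu_{2k-2}$, but the division-with-remainder viewpoint hides the elementary-symmetric-function bookkeeping and is cleaner to write up.
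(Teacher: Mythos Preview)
Your proof is correct and follows essentially the same route as the paper: both reduce self-orthogonality to the moment equations $\sum_i v_i^2 a_i^m=0$ for $0\le m\le 2k-2$ and then invoke Lemma~\ref{y3}. The only difference is packaging---the paper writes down an explicit basis $(a_i^h L_{\overrightarrow a}(a_i)^{-1})_{1\le i\le n}$, $0\le h\le n-2k$, of the null space of the associated Vandermonde matrix, whereas you define $\lambda$ by Lagrange interpolation and read Lemma~\ref{y3} as ``$\sum_i P(a_i)L_{\overrightarrow a}(a_i)^{-1}$ equals the $x^{n-1}$-coefficient of $P$'' to bound $\deg\lambda$ directly; these are two phrasings of the same computation.
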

\begin{proof}
Let $(v_1a_1^i,v_2a_2^i,\ldots,v_na_n^i)$ be a basis of $\mathbf{GRS}_k(\overrightarrow{a},\overrightarrow{v})$ with $0\leq i\leq k-1$.
\begin{equation*}
\begin{aligned}
&\mathbf{GRS}_{k}(\overrightarrow{a},\overrightarrow{v})\,\text{is self-orthogonal}&\\
\Leftrightarrow &v_{1}^{2}a_{1}^{m}+\ldots+v_{n}^{2}a_{n}^{m}=0, \,\,\text{for}\,\, 0\leq m \leq 2k-2&\\
\Leftrightarrow &\left(
\begin{array}{cccc}
1&1&\cdots&1\\
a_{1}&a_{2}&\cdots&a_{n}\\
\vdots&\vdots&\ddots&\vdots\\
a_{1}^{2k-2}&a_{2}^{2k-2}&\cdots&a_{n}^{2k-2}\\
\end{array}
\right)
\left(
\begin{array}{c}
v_{1}^{2}\\
v_{2}^{2}\\
\vdots\\
v_{n}^{2}
\end{array}
\right)
=\overrightarrow{0}.&
\end{aligned}
\end{equation*}
Denote by $v_i^2=x_i$ with $1\leq i\leq n$. By Lemma \ref{y3}, the solution of
\begin{equation*}
\left(
\begin{array}{cccc}
1&1&\cdots&1\\
a_{1}&a_{2}&\cdots&a_{n}\\
\vdots&\vdots&\ddots&\vdots\\
a_{1}^{n-2}&a_{2}^{n-2}&\cdots&a_{n}^{n-2}\\
\end{array}
\right)
\left(
\begin{array}{c}
x_{1}\\
x_{2}\\
\vdots\\
x_{n}\\
\end{array}
\right)
=\overrightarrow{0}
\end{equation*}
is
\begin{equation*}
\lambda \cdot \left(\frac{1}{L_{\overrightarrow{a}}(a_1)},\frac{1}{L_{\overrightarrow{a}}(a_2)},\ldots,\frac{1}{L_{\overrightarrow{a}}(a_n)}\right)^T,
\end{equation*}
where $\lambda\in\F_q$. Let
\begin{center}
$A=\left(
\begin{array}{cccc}
1&1&\cdots&1\\
a_{1}&a_{2}&\cdots&a_{n}\\
\vdots&\vdots&\ddots&\vdots\\
a_{1}^{2k-2}&a_{2}^{2k-2}&\cdots&a_{n}^{2k-2}\\
\end{array}
\right)$ and
$X=\left(
\begin{array}{c}
v_{1}^{2}\\
v_{2}^{2}\\
\vdots\\
v_{n}^{2}
\end{array}
\right)=
\left(
\begin{array}{c}
x_{1}\\
x_{2}\\
\vdots\\
x_{n}\\
\end{array}
\right)$.
\end{center}
Since $\mathrm{rank}(A)=n-2k+1$, the vectors
\begin{equation}\label{AX=0}
\left(\frac{1}{L_{\overrightarrow{a}}(a_1)},\ldots,\frac{1}{L_{\overrightarrow{a}}(a_n)}\right)^T,  \left(\frac{a_1}{L_{\overrightarrow{a}}(a_1)},\ldots,\frac{a_n}{L_{\overrightarrow{a}}(a_n)}\right)^T,\ldots, \left(\frac{a_1^{n-2k}}{L_{\overrightarrow{a}}(a_1)},\ldots,\frac{a_n^{n-2k}}{L_{\overrightarrow{a}}(a_n)}\right)^T
\end{equation}
form a basis of the solution space of $AX=\overrightarrow{0}$. Therefore,
\begin{center}
$v_i^2=\sum\limits_{h=0}^{n-2k}\lambda_h a_i^h\cdot L_{\overrightarrow{a}}(a_i)^{-1}\neq0\,\,\text{for any}\,\,1\leq i\leq n,$
\end{center}
where $\lambda_h\in \mathbb{F}_{q}$ for $0\leq h\leq n-2k$. As a result, $v_i^2=\lambda(a_i)L_{\overrightarrow{a}}(a_i)^{-1}\neq0$ for any $1\leq i\leq n$.
\end{proof}
As a corollary of this result, Corollary 2.4 of [\ref{JX2}] can be deduced directly with even $n$ and $k=\frac{n}{2}$ by choosing $\lambda(x)=\lambda$ to be a constant.

\begin{corollary}([\ref{JX2}], Corollary 2.4)\label{y1}
Let $n$ be an even integer and $k=\frac{n}{2}$. If there exists $\lambda\in\mathbb{F}_{q}^{*}$ such that
$\lambda L_{\overrightarrow{a}}(a_{i})^{-1}\in QR_q$ for all $1\leq i\leq n$, then there exists $\overrightarrow{v}=(v_{1},\ldots,v_{n})$ with
$v_{i}^{2}=\lambda L_{\overrightarrow{a}}(a_{i})^{-1}$ such that the code $\mathbf{GRS}_{k}(\overrightarrow{a},\overrightarrow{v})$ defined
in (1) is an MDS self-dual code of length $n$.
\end{corollary}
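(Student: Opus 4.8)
The plan is to obtain this statement as an immediate specialization of Lemma~\ref{GRS}. First I would set $k=\frac{n}{2}$, so that the degree bound $n-2k$ appearing in Lemma~\ref{GRS} becomes $0$; consequently the polynomial $\lambda(x)$ there is forced to be a constant, say $\lambda(x)=\lambda_0$, and the criterion of Lemma~\ref{GRS} collapses to the single requirement that $v_i^2=\lambda_0\, L_{\overrightarrow{a}}(a_i)^{-1}\neq 0$ for all $1\leq i\leq n$.

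Next I would invoke the hypothesis. Since $\lambda\, L_{\overrightarrow{a}}(a_i)^{-1}\in QR_q$ is a nonzero square of $\mathbb{F}_q$ for every $i$, one may choose $v_i\in\mathbb{F}_q^{*}$ with $v_i^2=\lambda\, L_{\overrightarrow{a}}(a_i)^{-1}$, and set $\overrightarrow{v}=(v_1,\ldots,v_n)$. Taking $\lambda_0:=\lambda$, the constant polynomial $\lambda(x)=\lambda$ satisfies exactly the condition in Lemma~\ref{GRS}, hence $C:=\mathbf{GRS}_{k}(\overrightarrow{a},\overrightarrow{v})$ is Euclidean self-orthogonal, i.e.\ $C\subseteq C^{\perp}$.

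Finally I would promote self-orthogonality to self-duality by a dimension count. The code $C$ is a $q$-ary $[n,k]$ MDS code and its dual $C^{\perp}$ is an MDS code of dimension $n-k$. Because $k=\frac{n}{2}=n-k$, we have $\dim C=\dim C^{\perp}$, which together with $C\subseteq C^{\perp}$ forces $C=C^{\perp}$. Thus $C$ is an MDS Euclidean self-dual code of length $n$. The argument presents no genuine obstacle, as it is purely a bookkeeping specialization of Lemma~\ref{GRS}; the only two points deserving a word of care are that the hypothesis $\lambda\, L_{\overrightarrow{a}}(a_i)^{-1}\in QR_q$ is precisely what guarantees the square roots $v_i$ lie in $\mathbb{F}_q^{*}$ (so that $\overrightarrow{v}$ is a legitimate GRS multiplier vector), and that the step from $C\subseteq C^{\perp}$ to $C=C^{\perp}$ relies on the equality of dimensions, which holds exactly when $k=\frac{n}{2}$.
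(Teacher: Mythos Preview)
Your proposal is correct and follows exactly the approach indicated in the paper: the corollary is stated immediately after the sentence ``As a corollary of this result, Corollary~2.4 of [\ref{JX2}] can be deduced directly with even $n$ and $k=\frac{n}{2}$ by choosing $\lambda(x)=\lambda$ to be a constant,'' and your argument simply makes this specialization of Lemma~\ref{GRS} explicit, together with the obvious dimension count $k=\tfrac{n}{2}=n-k$ to pass from self-orthogonality to self-duality.
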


\begin{lemma}\label{eGRS}
Assume $1\leq k\leq \lfloor\frac{n+1}{2}\rfloor$. The code $\mathbf{GRS}_k(\overrightarrow{a},\overrightarrow{v},\infty)$ is Euclidean self-orthogonal
if and only if there exists a polynomial $\lambda(x)=-x^{n-2k+1}+\lambda_{n-2k}x^{n-2k}+\cdots+\lambda_0\in \mathbb{F}_{q}[x]$ such that
$v_i^2=\lambda(a_i)L_{\overrightarrow{a}}(a_i)^{-1}\neq0$ for any $1\leq i\leq n$.
\end{lemma}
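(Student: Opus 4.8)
The plan is to mimic the proof of Lemma \ref{GRS}, tracking the extra coordinate $f_{k-1}$ that the extension introduces. Write $f(x)=\sum_{t=0}^{k-1} f_t x^t$, so a generating set for $\mathbf{GRS}_k(\overrightarrow{a},\overrightarrow{v},\infty)$ is obtained from the monomials $f(x)=x^i$ with $0\le i\le k-1$: the $i$-th generator is $(v_1 a_1^i,\dots,v_n a_n^i,\delta_{i,k-1})$, where $\delta_{i,k-1}$ is $1$ when $i=k-1$ and $0$ otherwise. Self-orthogonality is equivalent to the vanishing of all pairwise inner products of these generators, i.e.
\[
\sum_{\ell=1}^{n} v_\ell^2 a_\ell^{i+j} + \delta_{i,k-1}\delta_{j,k-1} = 0 \qquad (0\le i,j\le k-1).
\]
For $i+j\le 2k-3$ the Kronecker term is absent and we recover $\sum_\ell v_\ell^2 a_\ell^m=0$ for $0\le m\le 2k-3$; the single new equation is the one with $i=j=k-1$, namely $\sum_\ell v_\ell^2 a_\ell^{2k-2} = -1$. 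So self-orthogonality of the extended code is equivalent to the linear system $A'X = (0,\dots,0,-1)^T$ in the unknowns $x_\ell=v_\ell^2$, where $A'$ is the $(2k-1)\times n$ Vandermonde-type matrix with rows $a_\ell^m$, $0\le m\le 2k-2$.

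Next I would solve this affine system using Lemma \ref{y3} exactly as in the proof of Lemma \ref{GRS}. The homogeneous solution space of the first $2k-2$ equations (rows $0,\dots,2k-3$) together with the full rank argument: the $n\times n$ Vandermonde with rows up to $a_\ell^{n-1}$ shows that $\bigl(a_\ell^h L_{\overrightarrow{a}}(a_\ell)^{-1}\bigr)_\ell$, for $0\le h\le n-1$, is a basis of $\F_q^n$ adapted to these equations: by Lemma \ref{y3}, the vector $\bigl(a_\ell^h/L_{\overrightarrow{a}}(a_\ell)\bigr)_\ell$ is annihilated by row $m$ precisely when $m\ne n-1-h$, and pairs with row $n-1-h$ to give $1$. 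Hence among $h=0,\dots,n-2k+1$, the vectors with $h\le n-2k$ lie in the kernel of all rows $0,\dots,2k-2$, while the vector with $h=n-2k+1$ is annihilated by rows $0,\dots,2k-3$ and pairs with row $2k-2$ to give $1$. Therefore the general solution of $A'X=(0,\dots,0,-1)^T$ is
\[
x_\ell = \Bigl(-a_\ell^{\,n-2k+1} + \sum_{h=0}^{n-2k}\lambda_h a_\ell^{\,h}\Bigr) L_{\overrightarrow{a}}(a_\ell)^{-1}, \qquad \lambda_h\in\F_q,
\]
which is exactly $v_\ell^2 = \lambda(a_\ell) L_{\overrightarrow{a}}(a_\ell)^{-1}$ with $\lambda(x) = -x^{n-2k+1} + \lambda_{n-2k}x^{n-2k}+\cdots+\lambda_0$. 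The condition $v_\ell^2\ne 0$ is recorded as $\lambda(a_\ell)L_{\overrightarrow{a}}(a_\ell)^{-1}\ne 0$ for all $\ell$, and conversely any such choice of $\lambda$ produces a valid $\overrightarrow{v}$ (up to sign of each $v_\ell$, which is immaterial). This establishes the equivalence in both directions.

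The one genuine subtlety — the "hard part," such as it is — is the bookkeeping that degree $n-2k+1$ is the correct top degree and that its leading coefficient is forced to be $-1$ rather than free: this is exactly the shadow of the extra coordinate, and it hinges on the identity $\sum_\ell a_\ell^{n-1} L_{\overrightarrow{a}}(a_\ell)^{-1}=1$ from Lemma \ref{y3} applied with $m=n-1$ against the row $a_\ell^{2k-2}$, i.e. on the match $n-1 = (n-2k+1)+(2k-2)$. I would also note at the outset the constraint $2k-2\le n-1$, i.e. $k\le\lfloor\frac{n+1}{2}\rfloor$, which is needed for the row $a_\ell^{2k-2}$ to be among the first $n$ and for $\deg\lambda = n-2k+1\ge 0$; this is precisely the hypothesis assumed in the statement. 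Everything else is the same Vandermonde/partial-fractions computation already carried out for Lemma \ref{GRS}.
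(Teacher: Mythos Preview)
Your proposal is correct and follows essentially the same route as the paper: set up the self-orthogonality conditions as the linear system $\sum_i v_i^2 a_i^m=0$ for $0\le m\le 2k-3$ together with $\sum_i v_i^2 a_i^{2k-2}=-1$, solve the homogeneous part via the basis $\bigl(a_i^h L_{\overrightarrow{a}}(a_i)^{-1}\bigr)_i$ for $0\le h\le n-2k+1$ (as in Lemma~\ref{GRS}), and then pin down $\lambda_{n-2k+1}=-1$ from the last equation using the $m=n-1$ case of Lemma~\ref{y3}. Your write-up is in fact slightly more careful than the paper's, making the converse direction and the role of the hypothesis $k\le\lfloor\frac{n+1}{2}\rfloor$ explicit.
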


\begin{proof}
Since $\mathbf{GRS}_k(\overrightarrow{a},\overrightarrow{v},\infty)$ is self-orthogonal, one has
\begin{equation}\label{extended GRS}
\begin{cases}
\sum\limits_{i=1}^n v_i^{2}a_{i}^m=0 ,&0\leq m \leq 2k-3;\\
\sum\limits_{i=1}^n v_i^{2}a_{i}^{2k-2}+1=0.
\end{cases}
\end{equation}
Similarly as in Lemma \ref{GRS}, from the first equation of (\ref{extended GRS}),
\begin{equation*}
v_i^2=\sum\limits_{h=0}^{n-2k+1}\lambda_h a_i^h\cdot L_{\overrightarrow{a}}(a_i)^{-1}\,\,\text{for any}\,\,1\leq i\leq n.
\end{equation*}
Substituting to the last equation of (\ref{extended GRS}),
\begin{equation*}
\lambda_{n-2k+1}\sum\limits_{i=1}^na_i^{n-1}\cdot L_{\overrightarrow{a}}(a_i)^{-1}+1=0,
\end{equation*}
which implies $\lambda_{n-2k+1}=-1$ from Lemma \ref{y3}. Hence there exists a polynomial
\begin{equation}\label{polynomial}
\lambda(x)=-x^{n-2k+1}+\lambda_{n-2k}x^{n-2k}+\cdots+\lambda_0\in \mathbb{F}_{q}[x]
\end{equation}
such that
$v_i^2=\lambda(a_i)L_{\overrightarrow{a}}(a_i)^{-1}\neq0$ for any $1\leq i\leq n$.
\end{proof}

As a corollary of this result, Lemma 2 in [\ref{Yan}] can be deduced directly with odd $n$ and $k=\frac{n+1}{2}$ by setting $\lambda(x)=-1$.

\begin{corollary}([\ref{Yan}], Lemma 2)\label{y2}
Let $n$ be odd and $k=\frac{n+1}{2}$. If $-L_{\overrightarrow{a}}(a_{i})^{-1}\in QR_q$ for all $1\leq i\leq n$,
then there exists $\overrightarrow{v}=(v_{1},\ldots,v_{n})$ with $v_{i}^{2}=- L_{\overrightarrow{a}}(a_{i})^{-1}$ such that
$\mathbf{GRS}_{k}(\overrightarrow{a},\overrightarrow{v},\infty)$ defined in (2) is an MDS self-dual code of length $n+1$.
\end{corollary}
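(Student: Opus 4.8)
The plan is to derive Corollary \ref{y2} as a direct specialization of Lemma \ref{eGRS}. First I would observe that under the hypotheses $n$ odd and $k=\frac{n+1}{2}$, the exponent $n-2k+1$ appearing in Lemma \ref{eGRS} equals $n-(n+1)+1=0$, so the polynomial $\lambda(x)=-x^{n-2k+1}+\lambda_{n-2k}x^{n-2k}+\cdots+\lambda_0$ collapses to the constant $\lambda(x)=-1$ (there are no terms of positive degree, since $n-2k+1=0$ forces all the remaining coefficients to be absent). Thus Lemma \ref{eGRS} in this range says precisely: $\mathbf{GRS}_k(\oa,\ov,\infty)$ is Euclidean self-orthogonal if and only if $v_i^2=-L_{\oa}(a_i)^{-1}\neq 0$ for all $1\le i\le n$.

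Next I would handle the existence of $\ov$. The hypothesis $-L_{\oa}(a_i)^{-1}\in QR_q$ for every $i$ means exactly that $-L_{\oa}(a_i)^{-1}$ is a nonzero square in $\F_q$, so one may pick $v_i\in\F_q^*$ with $v_i^2=-L_{\oa}(a_i)^{-1}$ for each $i$; these choices assemble into a vector $\ov=(v_1,\dots,v_n)$ with all $v_i\in\F_q^*$, which is the data required to define a GRS code. By the "if" direction of Lemma \ref{eGRS}, the extended code $\mathbf{GRS}_k(\oa,\ov,\infty)$ is then Euclidean self-orthogonal.

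Finally I would upgrade self-orthogonality to self-duality by a dimension count. The code $\mathbf{GRS}_k(\oa,\ov,\infty)$ has length $n+1$ and dimension $k=\frac{n+1}{2}$, as recalled in the preliminaries (it is a $q$-ary $[n+1,k]$ MDS code); hence its Euclidean dual has dimension $(n+1)-k=\frac{n+1}{2}=k$. Since $C\subseteq C^\perp$ and $\dim C=\dim C^\perp$, we conclude $C=C^\perp$, i.e. the code is Euclidean self-dual of length $n+1$. Being a length-$(n+1)$ dimension-$\frac{n+1}{2}$ MDS code, it is an MDS Euclidean self-dual code, as claimed.

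There is essentially no obstacle here: the content is entirely contained in Lemma \ref{eGRS}, and the only things to check are the bookkeeping that $n-2k+1=0$ kills every term of $\lambda(x)$ except the constant $-1$, and the standard dimension argument that an MDS self-orthogonal code of length $n+1$ and dimension $\frac{n+1}{2}$ is automatically self-dual. If one wanted to avoid quoting the MDS/dual dimension fact, one could instead note directly from Lemma \ref{y3} that the $2k-1=n$ orthogonality relations in (\ref{extended GRS}) are exactly the defining relations, and that the solution space they cut out has dimension $1$ (spanned by the $L_{\oa}(a_i)^{-1}$ vector), matching the count of free parameters, so no further independent vector can lie in $C^\perp\setminus C$.
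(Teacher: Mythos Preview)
Your proposal is correct and follows exactly the approach indicated in the paper: the corollary is obtained from Lemma~\ref{eGRS} by specializing to odd $n$ and $k=\frac{n+1}{2}$, which forces $n-2k+1=0$ and hence $\lambda(x)=-1$. The dimension-count argument upgrading self-orthogonality to self-duality is the standard step that the paper leaves implicit.
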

\begin{lemma}([\ref{Yan}], Lemma 3)\label{y2}
Let $m\mid q-1$ be a positive integer and $\alpha\in\mathbb{F}_{q}$ be a primitive $m$-th root of unity.
Then for any $1\leq i\leq m$, we have $$\prod_{1\leq j\leq m, j\neq i}\left(\alpha^{i}-\alpha^{j}\right)=m\alpha^{-i}.$$

\end{lemma}


\section{MDS Euclidean Self-dual Codes}

 \quad\; In this section, we present new MDS Euclidean self-dual codes from GRS codes.
Throughout the remaining of this paper, we fix $g$ as a primitive element of $\mathbb{F}_{q}^*$.

\begin{theorem}\label{thmA1}
Let $q=r^{2}$, where $r$ is an odd prime power. For any $1\leq s\leq \frac{r+1}{2}$ and $1\leq t\leq \frac{r-1}{2}$, assume that $n=s(r-1)+t(r+1)$.
There exists a $q$-ary $[n,\frac{n}{2}]$ MDS Euclidean self-dual code, if $r$ and $s$ satisfy one of the following conditions:

(i). $r\equiv1\,(\mathrm{mod}\,4)$ and $s$ is even.

(ii). $r\equiv3\,(\mathrm{mod}\,4)$ and $s$ is odd.
\end{theorem}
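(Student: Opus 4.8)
The plan is to apply Corollary~\ref{y1}: I must exhibit an explicit evaluation set $\overrightarrow{a}=(a_1,\ldots,a_n)$ of $n=s(r-1)+t(r+1)$ distinct elements of $\F_q=\F_{r^2}$ together with a scalar $\lambda\in\F_q^*$ so that $\lambda\, L_{\overrightarrow{a}}(a_i)^{-1}\in QR_q$ for every $i$. The natural choice, following the coset-union technique already present in the literature, is to take $\overrightarrow{a}$ to be a union of multiplicative cosets tailored to the two divisors $r-1$ and $r+1$ of $q-1$. Concretely, since $(r-1)\mid(q-1)$ and $(r+1)\mid(q-1)$, let $\beta$ be a primitive $(r-1)$-th root of unity and $\gamma$ a primitive $(r+1)$-th root of unity in $\F_q^*$; I would take $\overrightarrow{a}$ to consist of $s$ multiplicative cosets of $\langle\beta\rangle$ (each of size $r-1$) and $t$ multiplicative cosets of $\langle\gamma\rangle$ (each of size $r+1$), choosing the coset representatives to be suitable powers of the fixed primitive element $g$ so that all $n$ elements are pairwise distinct. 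This is where the constraints $s\le\frac{r+1}{2}$ and $t\le\frac{r-1}{2}$ enter: they guarantee enough room to pick disjoint cosets of each type.

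The computational heart is evaluating $L_{\overrightarrow{a}}(a_i)=\prod_{j\ne i}(a_i-a_j)$ for each $i$. I would split the product according to whether $a_j$ lies in the same coset as $a_i$, in a different coset of the same type, or in a coset of the other type. For the within-coset part, Lemma~\ref{y2} (the $\prod_{j\ne i}(\alpha^i-\alpha^j)=m\alpha^{-i}$ identity) handles it exactly, producing a factor like $(r\pm1)\cdot a_i^{-1}$ up to the coset representative. For the cross-coset parts, products of the form $\prod_{b\in\langle\beta\rangle}(a_i-c b)$ telescope using $\prod_{b\in\langle\beta\rangle}(x-cb)=x^{r-1}-c^{r-1}$ and the analogous identity for $\langle\gamma\rangle$; evaluating at $x=a_i$ and using that $a_i$ itself is a coset element makes these simplify to powers of fixed constants times powers of $a_i$. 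Assembling everything, $L_{\overrightarrow{a}}(a_i)^{-1}$ should come out as (an explicit constant depending only on which coset $a_i$ belongs to) times (a power of $a_i$), and the power of $a_i$ should be controllable modulo $2$.

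To finish, I would multiply by a well-chosen $\lambda$ and check the quadratic-residue condition coset by coset. Two features of $\F_{r^2}$ make this tractable: every element of the prime subfield-like group $\langle\gamma\rangle$ (the norm-one subgroup, of order $r+1$) is automatically a square in $\F_{r^2}$ when... more precisely, an element of $\F_q^*$ is a square iff it is a $\frac{q-1}{2}$-th power, and since $\frac{q-1}{2}=\frac{(r-1)(r+1)}{2}$, elements of $\langle\beta\rangle$ (order $r-1$) and of $\langle\gamma\rangle$ (order $r+1$) have easily-determined square status depending on the parity of $\frac{r+1}{2}$ and $\frac{r-1}{2}$ respectively. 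The residue conditions on the constants emerging from the within-coset factors $r\pm1$ and from the cross-coset constants $c^{r-1}$, $c^{r+1}$ then collapse to a single congruence condition, and this is exactly where the hypotheses (i) $r\equiv1\pmod4$ with $s$ even, and (ii) $r\equiv3\pmod4$ with $s$ odd, are forced: they are precisely what is needed to make $\eta$ of the accumulated constant equal to $1$ on every coset simultaneously. The main obstacle I anticipate is bookkeeping the parity of the exponent of $a_i$ in $L_{\overrightarrow{a}}(a_i)$ and the sign/residue of the product of all the cross-coset constants; getting these parities to align uniformly across all $s+t$ cosets is the delicate point, and it is resolved by the stated parity conditions on $s$ together with the value of $r\bmod 4$.
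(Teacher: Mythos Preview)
Your strategy coincides with the paper's: take $\overrightarrow{a}$ to be the union of $s$ cosets of the order-$(r-1)$ subgroup and $t$ cosets of the order-$(r+1)$ subgroup of $\F_q^*$, split each $L_{\overrightarrow{a}}(a_i)$ into within-coset, same-type cross-coset, and other-type cross-coset factors, and then invoke Corollary~\ref{y1}. Two points you pass over are precisely where the paper does the real work.

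First, disjointness of the two families is not just a matter of ``room''. The paper forces it by placing one family entirely inside $QR_q$ (coset representatives $1,\beta,\ldots,\beta^{s-1}$ with $\beta=g^{r-1}$, automatically squares) and the other entirely inside the non-residues (representatives $\gamma,\gamma^3,\ldots$ with $\gamma=g^{(r\pm1)/2}$ a non-square); this is what actually guarantees the two unions cannot intersect.

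Second, and this is the genuine gap in your sketch, the same-type cross-coset products do \emph{not} reduce to a single constant $c^{r\pm1}$ whose square status is ``easily determined''. After the $x^{r\pm1}-c^{r\pm1}$ factorisation you still face a Vandermonde-type product such as
\[
u=\prod_{l=0,\,l\ne i}^{s-1}\bigl(\beta^{-2i}-\beta^{-2l}\bigr),
\]
whose quadratic character varies with $i$ and does not lie in either cyclic subgroup. The paper's device is to apply the Frobenius $x\mapsto x^r$: since $\beta^r=\beta^{-1}$, one finds $u^r=(-1)^{s-1}\beta^{2((s-2)i+s(s-1)/2)}\cdot u$, hence $u^{r-1}$ is an explicit power of $g^{r-1}$, and therefore $u$ itself is an explicit power of $g$ modulo $g^{r+1}$. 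The parity of that exponent is exactly where ``$s$ even and $r\equiv1\pmod4$'' (resp.\ ``$s$ odd and $r\equiv3\pmod4$'') enters and makes the choice $\lambda=\gamma$ work uniformly. Your outline is on the right track, but without this Frobenius step you have no mechanism to control the quadratic character of these products.
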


\begin{proof}
(i). Let $\alpha=g^{r+1}$, $\beta=g^{r-1}$ and $\gamma=g^{\frac{r+1}{2}}$. Denote by
$\langle\alpha\rangle=\{1,\alpha,\ldots,\alpha^{r-2}\}$ and $\langle\beta\rangle=\{1,\beta,\ldots,\beta^r\}$ and choose
\begin{equation*}
\begin{aligned}
\overrightarrow{a}=\left(\langle\alpha\rangle,\beta\langle\alpha\rangle,\ldots,\beta^{s-1}\langle\alpha\rangle,
\gamma\langle\beta\rangle,\gamma^3\langle\beta\rangle,\ldots,\gamma^{2(t-1)+1}\langle\beta\rangle\right).
\end{aligned}
\end{equation*}
Since $r\equiv1\,(\mathrm{mod}\,4)$, it follows that $\alpha,\beta\in QR_q$ and $\gamma\not\in QR_q$. Therefore,
$\beta^i\langle\alpha\rangle\bigcap\gamma^{2j+1}\langle\beta\rangle=\emptyset$ for any $0\leq i\leq s-1$ and $0\leq j\leq t-1$.
It is necessary to consider two cases.
\begin{itemize}
  \item  For $0\leq i\leq s-1$ and $0\leq j\leq r-2$,
  \begin{equation}\label{L1}
\begin{aligned}
L_{\overrightarrow{a}}(\beta^i\alpha^j)&=\prod\limits_{h=0,h\neq j}^{r-2}(\beta^i\alpha^j-\beta^i\alpha^h)\cdot\prod\limits_{l=0,l\neq i}^{s-1}\prod\limits_{h=0}^{r-2}
(\beta^i\alpha^j-\beta^l\alpha^h)\cdot\prod\limits_{l=0}^{t-1}\prod\limits_{h=0}^r(\beta^i\alpha^j-\beta^h\gamma^{2l+1})&\\
&=\beta^{i(r-2)}\cdot(r-1)\cdot\alpha^{-j}\cdot\prod\limits_{l=0,l\neq i}^{s-1}(\beta^{-2i}-\beta^{-2l})
\cdot\prod\limits_{l=0}^{t-1}\left(\alpha^{j(r+1)}-\gamma^{(2l+1)(r+1)}\right).&\\
\end{aligned}
\end{equation}
Here, the second equality follows from Lemma \ref{y2}.

Note that $\beta,\alpha,r-1\in QR_q$ and $\prod\limits_{l=0}^{t-1}\left(\alpha^{j(r+1)}-\gamma^{(2l+1)(r+1)}\right)\in\mathbb{F}_r^*\subseteq QR_q$. We
only need to consider $u:=\prod\limits_{l=0,l\neq i}^{s-1}(\beta^{-2i}-\beta^{-2l})$. Then
\begin{equation*}
\begin{aligned}
u^r&=\prod\limits_{l=0,l\neq i}^{s-1}(\beta^{2i}-\beta^{2l})=\prod\limits_{l=0,l\neq i}^{s-1}(\beta^{2(i+l)}\cdot(\beta^{-2l}-\beta^{-2i}))&\\
&=\beta^{2\left((s-1)i+\sum\limits_{l=0,l\neq i}^{s-1}l\right)}\cdot(-1)^{s-1}\cdot\prod\limits_{l=0,l\neq i}^{s-1}(\beta^{-2i}-\beta^{-2l})&\\
&=(-1)^{s-1}\cdot\beta^{2\left((s-2)i+\frac{s(s-1)}{2}\right)}\cdot u.&
\end{aligned}
\end{equation*}
So $$u^{r-1}=(-1)^{s-1}\cdot\beta^{2\left((s-2)i+\frac{s(s-1)}{2}\right)}=g^{\frac{(r+1)(s-1)}{2}\cdot(r-1)+2\left((s-2)i+\frac{s(s-1)}{2}\right) \cdot(r-1)},$$
that is
\begin{equation}\label{u1}
\begin{aligned}
u=g^{\frac{(r+1)(s-1)}{2}+2\left((s-2)i+\frac{s(s-1)}{2}\right)+m\cdot(r+1)}
\end{aligned}
\end{equation}
 with some integer $m$.
  \item For $0\leq i\leq t-1$ and $0\leq j \leq r$,
\begin{equation}\label{L2}
\begin{aligned}
L_{\overrightarrow{a}}(\gamma^{2i+1}\beta^j)=&\prod\limits_{h=0,h\neq j}^{r}(\gamma^{2i+1}\beta^j-\gamma^{2i+1}\beta^h)\cdot
\prod\limits_{l=0,l\neq i}^{t-1}\prod\limits_{h=0}^{r}(\gamma^{2i+1}\beta^j-\gamma^{2l+1}\beta^h)&\\
&\cdot\prod\limits_{h=0}^{s-1}\prod\limits_{l=0}^{r-2}(\gamma^{2i+1}\beta^j-\alpha^l\beta^h)&\\
=&\left(\gamma^{r(2i+1)}\cdot(r+1)\cdot\beta^{-j}\right)\cdot\prod\limits_{l=0,l\neq i}^{t-1}\left(\gamma^{(2i+1)(r+1)}-\gamma^{(2l+1)(r+1)}\right)&\\
&\cdot\left((-1)^s\cdot\prod\limits_{h=0}^{s-1}(\beta^{-2j}+\beta^{-2h})\right).&\\
\end{aligned}
\end{equation}
Consider $u:=\prod\limits_{h=0}^{s-1}(\beta^{-2j}+\beta^{-2h})$. Then
\begin{equation*}
\begin{aligned}
u^r&=\prod\limits_{h=0}^{s-1}(\beta^{2j}+\beta^{2h})=\prod\limits_{h=0}^{s-1}\beta^{2(j+h)}(\beta^{-2j}+\beta^{-2h})&\\
&=\beta^{2\left(sj+\sum\limits_{h=0}^{s-1}h\right)}\cdot\prod\limits_{h=0}^{s-1}(\beta^{-2j}+\beta^{-2h})&\\
&=\beta^{2\left(sj+\frac{s(s-1)}{2}\right)}\cdot u.&
\end{aligned}
\end{equation*}
Hence $$u^{r-1}=g^{(r-1)\cdot2\left(sj+\frac{s(s-1)}{2}\right)},$$ that is, $u=g^{2\left(sj+\frac{s(s-1)}{2}\right)+m(r+1)}$  for some integer $m$.
\end{itemize}

Choose $\lambda=\gamma=g^{\frac{r+1}{2}}$.
\begin{itemize}
\item In (\ref{L1}), all of $\beta,\alpha,r-1$ and $\prod\limits_{l=0}^{t-1}\left(\alpha^{j(r+1)}-\gamma^{(2l+1)(r+1)}\right)$ are
squares in $\F_q^*$. From (\ref{u1}) and $s$ is even,
\begin{center}
$\lambda u^{-1}=g^{-\left(\frac{(r+1)(s-2)}{2}+2\left((s-2)i+\frac{s(s-1)}{2}\right)+m\cdot(r+1)\right)}$
\end{center}
is a nonzero square. Thus $\lambda L_{\overrightarrow{a}}(\beta^i\alpha^j)^{-1}\in QR_q$.
\item In (\ref{L2}), it is clear that all elements except $\gamma^{r(2i+1)}$ are nonzero squares. So
$\lambda L_{\overrightarrow{a}}(\gamma^{2i+1}\beta^j)^{-1}\in QR_q$.
\end{itemize}
In summary, by Corollary \ref{y1}, there exists a $q$-ary $\left[n,\frac{n}{2}\right]$ MDS Euclidean self-dual code over $\mathbb{F}_q$.

(ii). Since the proof of case (ii) is similar to the proof of case (i), we omit some details. Let $\alpha=g^{r+1}$, $\beta=g^{r-1}$ and
$\gamma=g^{\frac{r-1}{2}}$. Choose
\begin{equation*}
\begin{aligned}
\overrightarrow{a}=\left(\langle\beta\rangle,\alpha\langle\beta\rangle,\ldots,\alpha^{t-1}\langle\beta\rangle,
\gamma\langle\alpha\rangle,\gamma^{3}\langle\alpha\rangle,\ldots,\gamma^{2(s-1)+1}\langle\alpha\rangle\right).
\end{aligned}
\end{equation*}
Since $r\equiv3\,(\mathrm{mod}\,4)$, it follows that $\alpha,\beta\in QR_q$ and $\gamma\not\in QR_q$. Therefore,
$\alpha^i\langle\beta\rangle\bigcap\gamma^{2j+1}\langle\alpha\rangle=\emptyset$ for any $0\leq i\leq t-1$ and $0\leq j\leq s-1$.
We also consider two cases.
\begin{itemize}
\item For $0\leq i\leq t-1$ and $0\leq j\leq r$,
\begin{equation}\label{L3}
\begin{aligned}
L_{\overrightarrow{a}}(\alpha^i\beta^j)&=\prod\limits_{h=0,h\neq j}^{r}(\alpha^i\beta^j-\alpha^i\beta^h)\cdot\prod\limits_{l=0,l\neq i}^{t-1}\prod\limits_{h=0}^{r}
(\alpha^i\beta^j-\alpha^l\beta^h)\cdot\prod\limits_{h=0}^{s-1}\prod\limits_{l=0}^{r-2}(\alpha^i\beta^j-\alpha^l\gamma^{2h+1})&\\
&=\alpha^{ir}\cdot(r+1)\cdot\beta^{-j}\cdot\prod\limits_{l=0,l\neq i}^{t-1}(\alpha^{i(r+1)}-\alpha^{l(r+1)})
\cdot\prod\limits_{h=0}^{s-1}\left(\beta^{j(r-1)}-\gamma^{(2h+1)(r-1)}\right).&\\
\end{aligned}
\end{equation}
Since $\beta,\alpha,r+1\in QR_q$ and $\prod\limits_{l=0,l\neq i}^{t-1}(\alpha^{i(r+1)}-\alpha^{l(r+1)})\in\mathbb{F}_r^*\subseteq QR_q$, we
only need to consider $u:=\prod\limits_{h=0}^{s-1}\left(\beta^{j(r-1)}-\gamma^{(2h+1)(r-1)}\right)$. Let $\xi=\gamma^{r-1}$. Then
$u=\prod\limits_{h=0}^{s-1}(\xi^{2j}-\xi^{2h+1})$. Therefore,
\begin{equation*}
\begin{aligned}
u^r
=(-1)^{s}\cdot\xi^{-2sj-s^2}\cdot u.
\end{aligned}
\end{equation*}
It implies $$u^{r-1}=(-1)^{s}\cdot\xi^{-2sj-s^2}=g^{\frac{(r+1)s}{2}\cdot(r-1)+(-2sj-s^2)\cdot\frac{(r-1)^2}{2}},$$
that is,
\begin{equation}\label{u2}
\begin{aligned}
u=g^{\frac{(r+1)s}{2}-(s^2+2sj)\cdot\frac{r-1}{2}+m\cdot(r+1)}
\end{aligned}
\end{equation}
for some integer $m$.

\item For $0\leq i\leq s-1$ and $0\leq j\leq r-2$,
\begin{equation}\label{L4}
\begin{aligned}
L_{\overrightarrow{a}}(\gamma^{2i+1}\alpha^j)=&\prod\limits_{h=0,h\neq j}^{r-2}(\gamma^{2i+1}\alpha^j-\gamma^{2i+1}\alpha^h)\cdot
\prod\limits_{l=0,l\neq i}^{s-1}\prod\limits_{h=0}^{r-2}(\gamma^{2i+1}\alpha^j-\gamma^{2l+1}\alpha^h)&\\
&\cdot\prod\limits_{h=0}^{t-1}\prod\limits_{l=0}^{r}(\gamma^{2i+1}\alpha^j-\beta^{l}\alpha^h)&\\
=&\left(\gamma^{(r-2)(2i+1)}\cdot(r-1)\cdot\alpha^{-j}\right)\cdot\prod\limits_{l=0,l\neq
i}^{s-1}\left(\gamma^{(2i+1)(r-1)}-\gamma^{(2l+1)(r-1)}\right)&\\
&\cdot\prod\limits_{h=0}^{t-1}\left(-\alpha^{j(r+1)}-\alpha^{h(r+1)}\right).&\\
\end{aligned}
\end{equation}
\end{itemize}
Let $u=\prod\limits_{l=0,l\neq i}^{s-1}\left(\gamma^{(2i+1)(r-1)}-\gamma^{(2l+1)(r-1)}\right)$ and $\xi=\gamma^{r-1}=g^{\frac{(r-1)^2}{2}}$. Then
$u=\prod\limits_{l=0,l\neq i}^{s-1}(\xi^{2i+1}-\xi^{2l+1}).$ Therefore,
\begin{equation*}
\begin{aligned}
u^r
=(-1)^{s-1}\cdot\xi^{-(s-2)(2i+1)-s^2}\cdot u.
\end{aligned}
\end{equation*}
It follows that $$u^{r-1}=g^{(r-1)\cdot\left(\frac{(r+1)(s-1)}{2}-\frac{r-1}{2}\cdot((s-2)(2i+1)+s^2)\right)},$$ that is,
\begin{center}
$u=g^{\frac{(r+1)(s-1)}{2}-\frac{r-1}{2}\cdot((s-2)(2i+1)+s^2)+m(r+1)}$
\end{center}
with some integer $m$. Since $r\equiv3\,(\mathrm{mod}\,4)$ and $s$ is odd, it is easy to verify that
\begin{center}
$\frac{(r+1)(s-1)}{2}-\frac{r-1}{2}\cdot((s-2)(2i+1)+s^2)+m(r+1)$
\end{center}
is even, which yields $u\in QR_q$.

Choose $\lambda=\gamma=g^{\frac{r-1}{2}}$.
\begin{itemize}
\item In (\ref{L3}), all of $\beta,\alpha,r+1$ and $\prod\limits_{l=0,l\neq i}^{t-1}(\alpha^{i(r+1)}-\alpha^{l(r+1)})$ are nonzero
squares in $\F_q$. From (\ref{u2}), $r\equiv3\,(\mathrm{mod}\,4)$ and $s$ is odd,
\begin{center}
$\lambda u^{-1}=g^{-\left(\frac{(r+1)s}{2}-(s^2+1+2sj)\cdot\frac{r-1}{2}+m\cdot(r+1)\right)}$
\end{center}
is a nonzero square, which yields $\lambda L_{\overrightarrow{a}}(\alpha^i\beta^j)^{-1}\in QR_q$.
\item In (\ref{L4}), from the above analysis, it is clear that all factors except $\gamma^{(r-2)(2i+1)}$ are nonzero squares. As a consequence,
$\lambda L_{\overrightarrow{a}}(\gamma^{2i+1}\alpha^j)^{-1}\in QR_q$.
\end{itemize}
In short, by Corollary \ref{y1}, there exists a $q$-ary $\left[n,\frac{n}{2}\right]$ MDS Euclidean self-dual code over $\mathbb{F}_q$.
\end{proof}

\begin{remark}
In (i)(resp. (ii)) of Theorem \ref{thmA1}, the number of $s$ is $\frac{r-1}{4}$(resp. $\frac{r+1}{4}$) and the number of $t$ is $\frac{r-1}{2}$.
Thus the number of MDS Euclidean self-dual codes is $\frac{r-1}{4}\times\frac{r-1}{2}=\frac{1}{8}\cdot (r-1)^2\approx\frac{1}{8}q$
(resp. $\frac{r+1}{4}\times\frac{r-1}{2}=\frac{1}{8}\cdot (q-1)\approx\frac{1}{8}q$).
\end{remark}

\begin{example}
By utilizing all the previous results in Table 1, $1320$ many MDS Euclidean self-dual codes over $\mathbb{F}_{149^{2}}$ of different lengths can be
constructed. However, by Theorem \ref{thmA1}, there are $\frac{1}{8}\cdot(\sqrt{q}-1)^2=2738$ MDS Euclidean self-dual codes of different lengths.
\end{example}

\begin{example}
For $q=151^{2}$, by Theorem \ref{thmA1}, much more MDS Euclidean self-dual codes can be constructed than all previous works.
The comparison can be presented in the following table.
\begin{center}
\begin{longtable}{|c|c|}  
\caption{Comparison of MDS Euclidean self-dual codes of length $n$ over $\mathbb{F}_{151^2}$} \\ \hline
Source & the number of different lengths $n$ \\  \hline
All previous constructions &  $1500$    \\ \hline
Theorem \ref{thmA1} & $2850$\\ \hline
\end{longtable}
 \end{center}
\end{example}

\begin{remark}
For all the previous results, the number of possible lengths in most cases is bounded by $r$ or some divisor of $r-1$, $r+1$ or $q-1$. Therefore,
it is expected that when $q$ is large, the number of possible lengths is much less than $c\cdot q$ for some constant $c$. However, Theorem \ref{thmA1} will produce about $q/8$ MDS Euclidean self-dual codes of different  lengths which contribute $\frac{q/8}{q/2}=25\%$ to all possible lengths for MDS Euclidean self-dual codes.
\end{remark}

\section{MDS Euclidean Self-orthogonal Codes}

 \quad\; In this section, we apply the criterion(Lemma \ref{GRS}) to construct new MDS Euclidean self-orthogonal codes.

\begin{theorem}\label{thmB1}
Let $q=r^{2}$, where $r$ is an odd prime power. For any $1\leq s\leq \frac{r+1}{2}$ and $1\leq t\leq \frac{r-1}{2}$, assume $n=s(r-1)+t(r+1)$ and
$1\leq k\leq\frac{n}{2}-1$. Then there exists a $q$-ary $[n,k]$ MDS Euclidean self-orthogonal code.
\end{theorem}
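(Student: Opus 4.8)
The plan is to invoke the self-orthogonality criterion of Lemma \ref{GRS} with a polynomial of degree at most $1$. It suffices to produce a tuple $\oa=(a_1,\dots,a_n)$ of $n$ distinct elements of $\F_q^*$ together with a nonzero $\lambda(x)\in\F_q[x]$ with $\deg\lambda\le 1$ such that $\lambda(a_i)\neq 0$ and $\lambda(a_i)L_{\oa}(a_i)^{-1}\in QR_q$ for every $i$. Indeed, once such a pair is found, fix $\ov$ by $v_i^2=\lambda(a_i)L_{\oa}(a_i)^{-1}$; then $\deg\lambda\le 1\le n-2k$ for every $k\le\frac n2-1$, so Lemma \ref{GRS} shows that the single code family $\mathbf{GRS}_k(\oa,\ov)$ is Euclidean self-orthogonal simultaneously for all $1\le k\le\frac n2-1$, and each such code is automatically a $q$-ary $[n,k]$ MDS code. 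Thus the whole theorem reduces to exhibiting one good pair $(\oa,\lambda)$ for each admissible $(q,s,t)$.

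For $\oa$ I will reuse the evaluation tuple of Theorem \ref{thmA1}: part (i) when $r\equiv1\pmod4$ and part (ii) when $r\equiv3\pmod4$, which already consists of $n$ distinct nonzero entries. Since $\alpha=g^{r+1}$ and $\beta=g^{r-1}$ are squares, $\langle\alpha\rangle$ and $\langle\beta\rangle$ lie in $QR_q$, while $\gamma\notin QR_q$ in both cases; hence the entries of $\oa$ split into a ``square block'' (the cosets of $\langle\alpha\rangle$ in case (i), of $\langle\beta\rangle$ in case (ii)), all lying in $QR_q$, and a ``$\gamma$-block'' lying entirely in the nonsquares. When $s$ has the parity required in Theorem \ref{thmA1} ($s$ even if $r\equiv1\pmod4$, $s$ odd if $r\equiv3\pmod4$), the proof of that theorem already exhibits the constant polynomial $\lambda(x)=\gamma$ satisfying $\gamma L_{\oa}(a_i)^{-1}\in QR_q$ for all $i$, and this case is settled with $\deg\lambda=0$.

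The remaining case, where $s$ has the opposite parity and the constant choice fails, is the core of the argument, and here I will take $\lambda(x)=x$. Using the explicit factorizations of $L_{\oa}(a_i)$ from the proof of Theorem \ref{thmA1} (formulas (\ref{L1}), (\ref{L2}) together with (\ref{u1}) in case (i); formulas (\ref{L3}), (\ref{L4}) together with (\ref{u2}) in case (ii)), one checks that every factor except one is a square: powers of $\alpha$ and $\beta$ and elements of $\F_r^*$ are squares, $-1$ is a square because $q$ is an odd square, and, decisively, in this parity of $s$ the auxiliary products $u$ occurring in (\ref{u1}), (\ref{u2}) (and their analogues for the $\gamma$-block) have even $g$-exponent and are therefore squares; the single surviving nonsquare factor is $\gamma^{r(2i+1)}$ (resp. $\gamma^{(r-2)(2i+1)}$), which appears only in the $L_{\oa}$ of the points of the $\gamma$-block. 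Consequently $L_{\oa}(a_i)$ is a square exactly when $a_i$ lies in the square block and a nonsquare exactly when $a_i$ lies in the $\gamma$-block; that is, $a_i$ and $L_{\oa}(a_i)$ always have the same quadratic character, so $\lambda(a_i)L_{\oa}(a_i)^{-1}=a_iL_{\oa}(a_i)^{-1}\in QR_q$ and $\lambda(a_i)=a_i\neq0$ for all $i$. This completes the construction.

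The only delicate point, and the expected main obstacle, is this last parity bookkeeping: one must verify that, with the parity of $s$ opposite to the one used in Theorem \ref{thmA1}, the $g$-exponent of each product $u$ is even. This is precisely the computation already carried out inside the proof of Theorem \ref{thmA1} (there reaching the complementary conclusion), so it amounts to re-reading (\ref{u1}) and (\ref{u2}) with the other parity of $s$. Nonvanishing of the individual factors comes for free, since $L_{\oa}(a_i)=\prod_{j\neq i}(a_i-a_j)\neq0$ forces every factor in any correct factorization of it to be nonzero. Everything else is immediate from Lemma \ref{GRS} and the MDS property of GRS codes.
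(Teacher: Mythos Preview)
Your proposal is correct and follows essentially the same route as the paper: you reuse the evaluation tuple from Theorem~\ref{thmA1}, take $\lambda(x)=\gamma$ when $s$ has the parity covered there, and take $\lambda(x)=x$ for the complementary parity, then apply Lemma~\ref{GRS}. Your explanation of why $\lambda(x)=x$ works---matching the quadratic character of $a_i$ with that of $L_{\oa}(a_i)$ via the block structure---is in fact more detailed than what the paper writes, but the underlying computation is the same parity check on the exponents in (\ref{u1}) and (\ref{u2}).
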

\begin{proof}
$\bf{Case\; 1}$: $r\equiv1\,(\mathrm{mod}\,4)$. Recall $\alpha$, $\beta$ and $\gamma$ in the proof of Theorem \ref{thmA1}(i). We also choose
\begin{equation*}
\begin{aligned}
\overrightarrow{a}=\left(\langle\alpha\rangle,\beta\langle\alpha\rangle,\ldots,\beta^{s-1}\langle\alpha\rangle,
\gamma\langle\beta\rangle,\gamma^3\langle\beta\rangle,\ldots,\gamma^{2(t-1)+1}\langle\beta\rangle\right).
\end{aligned}
\end{equation*}
By the proof of Theorem \ref{thmA1}(i),
\begin{equation*}
\begin{aligned}
L_{\overrightarrow{a}}(\beta^i\alpha^j)&=\beta^{i(r-2)}\cdot(r-1)\cdot\alpha^{-j}\cdot g^{\frac{(r+1)(s-1)}{2}+2\left((s-2)i+\frac{s(s-1)}{2}\right)+k\cdot(r+1)}
\cdot\prod\limits_{l=0}^{t-1}\left(\alpha^{j(r+1)}-\gamma^{(2l+1)(r+1)}\right)&
\end{aligned}
\end{equation*}
and
\begin{equation*}
\begin{aligned}
L_{\overrightarrow{a}}(\gamma^{2i+1}\beta^j)=&\left(\gamma^{r(2i+1)}\cdot(r+1)\cdot\beta^{-j}\right)\cdot\prod\limits_{l=0,l\neq i}^{t-1}\left(\gamma^{(2i+1)(r+1)}-\gamma^{(2l+1)(r+1)}\right)&\\
&\cdot\left((-1)^s\cdot g^{2\left(sj+\frac{s(s-1)}{2}\right)+m(r+1)}\right).&\\
\end{aligned}
\end{equation*}
\begin{itemize}
\item When $s$ is even, choose $\lambda(x)=\gamma$. Then
$$\lambda(\beta^i\alpha^j)L_{\overrightarrow{a}}(\beta^i\alpha^j)^{-1},\lambda(\gamma^{2i+1}\beta^j)L_{\overrightarrow{a}}(\gamma^{2i+1}\beta^j)^{-1}\in QR_q.$$
\item When $s$ is odd, choose $\lambda(x)=x$. Then
$$\lambda(\beta^i\alpha^j)L_{\overrightarrow{a}}(\beta^i\alpha^j)^{-1},\lambda(\gamma^{2i+1}\beta^j)L_{\overrightarrow{a}}(\gamma^{2i+1}\beta^j)^{-1}\in QR_q.$$
\end{itemize}
By Lemma \ref{GRS}, it follows that there exists a $q$-ary $[n,k]$ MDS Euclidean self-orthogonal code.

$\bf{Case\; 2}$: $r\equiv3\,(\mathrm{mod}\,4)$. Recall $\alpha$, $\beta$ and $\gamma$ in Theorem \ref{thmA1}(ii). We choose
\begin{equation*}
\begin{aligned}
\overrightarrow{a}=\left(\langle\beta\rangle,\alpha\langle\beta\rangle,\ldots,\alpha^{t-1}\langle\beta\rangle,
\gamma\langle\alpha\rangle,\gamma^3\langle\alpha\rangle,\ldots,\gamma^{2(s-1)+1}\langle\alpha\rangle\right)
\end{aligned}
\end{equation*}
Utilizing the proof of Theorem \ref{thmA1}(ii),
\begin{equation*}
\begin{aligned}
L_{\overrightarrow{a}}(\alpha^i\beta^j)=\alpha^{ir}\cdot(r+1)\cdot\beta^{-j}\cdot\prod\limits_{l=0,l\neq i}^{t-1}(\alpha^{i(r+1)}-\alpha^{l(r+1)})
\cdot g^{\frac{(r+1)s}{2}-(s^2+2sj)\cdot\frac{r-1}{2}+m\cdot(r+1)}
\end{aligned}
\end{equation*}
and
\begin{equation*}
\begin{aligned}
L_{\overrightarrow{a}}(\gamma^{2i+1}\alpha^j)=&\left(\gamma^{(r-2)(2i+1)}\cdot(r-1)\cdot\alpha^{-j}\right)\cdot\prod\limits_{h=0}^{t-1}\left(-\alpha^{j(r+1)}-\alpha^{h(r+1)}\right).&\\
&\cdot g^{\frac{(r+1)(s-1)}{2}-\frac{r-1}{2}\cdot((s-2)(2i+1)+s^2)+m(r+1)}.&\\
\end{aligned}
\end{equation*}
\begin{itemize}
\item When $s$ is odd, choose $\lambda(x)=\gamma$. Then
$$\lambda(\beta^i\alpha^j)L_{\overrightarrow{a}}(\beta^i\alpha^j)^{-1},\lambda(\gamma^{2i+1}\beta^j)L_{\overrightarrow{a}}(\gamma^{2i+1}\beta^j)^{-1}\in QR_q.$$
\item When $s$ is even, choose $\lambda(x)=x$. Then
$$\lambda(\beta^i\alpha^j)L_{\overrightarrow{a}}(\beta^i\alpha^j)^{-1},\lambda(\gamma^{2i+1}\beta^j)L_{\overrightarrow{a}}(\gamma^{2i+1}\beta^j)^{-1}\in QR_q.$$
\end{itemize}
By Lemma \ref{GRS}, it follows that there exists a $q$-ary $[n,k]$ MDS Euclidean self-orthogonal code.
\end{proof}
\begin{remark}
In Theorem \ref{thmB1}, the conditions that $s$ is even in Theorem \ref{thmA1}(i) and $s$ is odd in Theorem \ref{thmA1}(ii) can be removed.
So we can construct approximately $\frac{1}{4}\cdot q$ MDS Euclidean self-orthogonal codes $C$ with length $n$ and dimension $\frac{n}{2}-1$.
which means $C\subseteq C^\perp$ and $\dim(C^\perp)=\dim(C)+2$.
\end{remark}

For odd length, Euclidean almost self-dual codes can be constructed via Lemma \ref{GRS}.

\begin{theorem}\label{thmC1}
Let $q=r^{2}$, where $r$ is an odd prime power. For any $1\leq s\leq \frac{r+1}{2}$ and $1\leq t\leq \frac{r-1}{2}$, assume that $n=s(r-1)+t(r+1)+1$.
Then there exists a $q$-ary $[n,\frac{n-1}{2}]$ MDS Euclidean almost self-dual code, provided that $r$ and $s$ satisfy one of the following conditions:

(i). $r\equiv1\,(\mathrm{mod}\,4)$ and $s$ is odd.

(ii). $r\equiv3\,(\mathrm{mod}\,4)$ and $s$ is even.
\end{theorem}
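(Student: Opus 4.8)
The plan is to realize the code as an extended GRS code over the same evaluation set $\overrightarrow{a}$ used in Theorem \ref{thmA1}, with one point at infinity adjoined. Put $n'=n-1=s(r-1)+t(r+1)$ (which is even) and $k=\frac{n-1}{2}=\frac{n'}{2}$. The code $\mathbf{GRS}_k(\overrightarrow{a},\overrightarrow{v},\infty)$ then has length $n'+1=n$, dimension $k$, is automatically MDS, and its dual is an MDS $[n,n-k]$ code with $n-k=\frac{n'}{2}+1=k+1$. Hence, once we show $\mathbf{GRS}_k(\overrightarrow{a},\overrightarrow{v},\infty)\subseteq\mathbf{GRS}_k(\overrightarrow{a},\overrightarrow{v},\infty)^{\perp}$, it is an MDS Euclidean almost self-dual code of length $n$ and dimension $\frac{n-1}{2}$, as claimed.

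Concretely, in case (i) I take $\alpha=g^{r+1}$, $\beta=g^{r-1}$, $\gamma=g^{\frac{r+1}{2}}$ and
\[
\overrightarrow{a}=\left(\langle\alpha\rangle,\beta\langle\alpha\rangle,\ldots,\beta^{s-1}\langle\alpha\rangle,\gamma\langle\beta\rangle,\gamma^{3}\langle\beta\rangle,\ldots,\gamma^{2(t-1)+1}\langle\beta\rangle\right),
\]
and in case (ii) the vector $\overrightarrow{a}$ from the proof of Theorem \ref{thmA1}(ii) with $\gamma=g^{\frac{r-1}{2}}$. The explicit values of $L_{\overrightarrow{a}}(\beta^i\alpha^j)$ and $L_{\overrightarrow{a}}(\gamma^{2i+1}\beta^j)$ (and their analogues in case (ii)) have already been computed in the proof of Theorem \ref{thmA1} and are recalled in the proof of Theorem \ref{thmB1}, so I may quote them verbatim.

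By Lemma \ref{eGRS} applied with $k=\frac{n'}{2}$, self-orthogonality of the extended code is equivalent to the existence of a polynomial $\lambda(x)=-x^{\,n'-2k+1}+\lambda_{n'-2k}x^{n'-2k}+\cdots+\lambda_0$ with $v_i^2=\lambda(a_i)L_{\overrightarrow{a}}(a_i)^{-1}\neq0$ for all $i$. Since $n'-2k+1=1$, this polynomial is linear, and I simply take $\lambda(x)=-x$: it is nonzero, has the mandatory leading coefficient $-1$, and $\lambda(a_i)=-a_i\neq0$ because every entry of $\overrightarrow{a}$ is a power of $g$. It then remains to verify that $-a_iL_{\overrightarrow{a}}(a_i)^{-1}\in QR_q$ for every $i$. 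The ingredients are: $-1\in QR_q$ since $q=r^{2}\equiv1\,(\mathrm{mod}\,4)$; on the square block of $\overrightarrow{a}$ (the points $\beta^i\alpha^j$, resp. $\alpha^i\beta^j$) the point $a_i$ is a nonzero square and, reading off the formula for $L_{\overrightarrow{a}}(a_i)$, every factor is a square except one power of $g$ whose exponent is even precisely under the stated parity of $s$ (the parity hypothesis flips relative to Theorem \ref{thmA1} exactly because $-a_i$ is now a square, whereas there the multiplier $\gamma$ was a non-square, so the parity that must be matched is reversed); on the non-square block (the points $\gamma^{2i+1}\beta^j$, resp. $\gamma^{2i+1}\alpha^j$) the point $a_i$ is a non-square and $L_{\overrightarrow{a}}(a_i)$ is likewise a non-square, the offending factor being a power of $\gamma$, so the quotient is again a square. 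Thus $v_i^2\in QR_q$ for all $i$, $v_i\in\F_q$ may be chosen, and Lemma \ref{eGRS} produces the desired self-orthogonal extended GRS code.

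The only real work is the square/non-square bookkeeping in the preceding paragraph: showing, under $r\equiv1\,(\mathrm{mod}\,4)$ with $s$ odd (resp. $r\equiv3\,(\mathrm{mod}\,4)$ with $s$ even), that the exponents of $g$ appearing in the auxiliary products $u=\prod_{l\neq i}(\beta^{-2i}-\beta^{-2l})$ and their companions have the parity that makes $-a_iL_{\overrightarrow{a}}(a_i)^{-1}$ a square. This is the same type of computation already carried out in the proofs of Theorems \ref{thmA1} and \ref{thmB1}; no new idea is needed, only care in tracking parities after replacing the constant multiplier by $\lambda(x)=-x$.
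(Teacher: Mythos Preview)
Your proposal is correct, but it takes a genuinely different route from the paper. The paper does \emph{not} use the extended GRS construction with a point at infinity; instead it adjoins the element $0$ to the evaluation set $\overrightarrow{a}$ of Theorem~\ref{thmA1}, obtaining an ordinary GRS code of length $n$, and then applies Lemma~\ref{GRS} with the constant polynomial $\lambda(x)=1$. This forces the paper to recompute each $L_{\overrightarrow{a}}(a_i)$ with the extra factor $(a_i-0)=a_i$ (hence the slightly different exponents compared to Theorem~\ref{thmA1}) and to check separately that $L_{\overrightarrow{a}}(0)^{-1}\in QR_q$.

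Your approach keeps $\overrightarrow{a}$ unchanged, adjoins $\infty$, and invokes Lemma~\ref{eGRS} with $\lambda(x)=-x$. Since $-1\in QR_q$, the resulting condition $-a_iL_{\overrightarrow{a}}(a_i)^{-1}\in QR_q$ is exactly the condition $a_iL_{\overrightarrow{a}}(a_i)^{-1}\in QR_q$ already verified in Theorem~\ref{thmB1} under the choice $\lambda(x)=x$ for precisely the parities of $s$ stated here. So your argument is in fact shorter: it reuses Theorem~\ref{thmB1} verbatim and avoids both the recomputation of $L_{\overrightarrow{a}}$ and the extra check at $0$. The two constructions are morally equivalent (swapping $0$ and $\infty$ on the projective line), but as written your proof is the more economical of the two.
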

\begin{proof}
(i). Recall $\alpha$, $\beta$, $\gamma$ and $g$ in Theorem \ref{thmA1}(i). Choose
\begin{equation*}
\begin{aligned}
\overrightarrow{a}=\left(\langle\alpha\rangle,\beta\langle\alpha\rangle,\ldots,\beta^{s-1}\langle\alpha\rangle,
\gamma\langle\beta\rangle,\gamma^3\langle\beta\rangle,\ldots,\gamma^{2(t-1)+1}\langle\beta\rangle,0\right).
\end{aligned}
\end{equation*}
Similarly as the proof of Theorem \ref{thmA1}(i),
\begin{equation*}
\begin{aligned}
L_{\overrightarrow{a}}(\beta^i\alpha^j)&=\beta^{i(r-1)}\cdot(r-1)\cdot g^{\frac{(r+1)(s-1)}{2}+2\left((s-2)i+\frac{s(s-1)}{2}\right)+m\cdot(r+1)}
\cdot\prod\limits_{l=0}^{t-1}\left(\alpha^{j(r+1)}-\gamma^{(2l+1)(r+1)}\right),&
\end{aligned}
\end{equation*}
and
\begin{equation*}
\begin{aligned}
L_{\overrightarrow{a}}(\gamma^{2i+1}\beta^j)=&\left(\gamma^{(r+1)(2i+1)}\cdot(r+1)\right)\cdot\prod\limits_{l=0,l\neq i}^{t-1}\left(\gamma^{(2i+1)(r+1)}-\gamma^{(2l+1)(r+1)}\right)&\\
&\cdot\left((-1)^s\cdot g^{2\left(sj+\frac{s(s-1)}{2}\right)+m(r+1)}\right)&
\end{aligned}
\end{equation*}
A routine calculation shows that
\begin{equation*}
\begin{aligned}
L_{\overrightarrow{a}}(0)&=\prod\limits_{l=0}^{s-1}\prod\limits_{h=0}^{r-2}(0-\beta^l\alpha^h)\cdot\prod\limits_{l=0}^{t-1}\prod\limits_{h=0}^r(0-\beta^h\gamma^{2l+1})&\\
&=\alpha^{\frac{(r-2)(r-1)s}{2}}\cdot\beta^{\frac{(r-1)(s-1)s+r(r+1)t}{2}}\cdot\gamma^{(r+1)t^2}.&
\end{aligned}
\end{equation*}
When $s$ is odd, it is easy to see
$$L_{\overrightarrow{a}}(\beta^i\alpha^j)^{-1},L_{\overrightarrow{a}}(\gamma^{2i+1}\beta^j)^{-1},L_{\overrightarrow{a}}(0)^{-1}\in QR_q.$$
By Lemma \ref{GRS}, it follows that there exists a $q$-ary $[n,\frac{n-1}{2}]$ MDS Euclidean almost self-dual code.

(ii). Recall $\alpha$, $\beta$, $\gamma$ and $g$ in Theorem \ref{thmA1}(ii). Choose
\begin{equation*}
\begin{aligned}
\overrightarrow{a}=\left(\langle\beta\rangle,\alpha\langle\beta\rangle,\ldots,\alpha^{t-1}\langle\beta\rangle,
\gamma\langle\alpha\rangle,\gamma^3\langle\alpha\rangle,\ldots,\gamma^{2(s-1)+1}\langle\alpha\rangle,0\right)
\end{aligned}
\end{equation*}
Similarly as the proof of Theorem \ref{thmA1}(ii),
\begin{equation*}
\begin{aligned}
L_{\overrightarrow{a}}(\alpha^i\beta^j)=\alpha^{i(r+1)}\cdot(r+1)\cdot\prod\limits_{l=0,l\neq i}^{t-1}(\alpha^{i(r+1)}-\alpha^{l(r+1)})
\cdot g^{\frac{(r+1)s}{2}-(s^2+2sj)\cdot\frac{r-1}{2}+m\cdot(r+1)}
\end{aligned}
\end{equation*}
and
\begin{equation*}
\begin{aligned}
L_{\overrightarrow{a}}(\gamma^{2i+1}\alpha^j)=&\left(\gamma^{(r-1)(2i+1)}\cdot(r-1)\right)\cdot\prod\limits_{l=0,l\neq
i}^{s-1}\left(\gamma^{(2i+1)(r-1)}-\gamma^{(2l+1)(r-1)}\right)&\\
&\cdot g^{\frac{(r+1)(s-1)}{2}-\frac{r-1}{2}\cdot((s-2)(2i+1)+s^2)+m(r+1)}.&\\
\end{aligned}
\end{equation*}
Meanwhile,
\begin{equation*}
\begin{aligned}
L_{\overrightarrow{a}}(0)&=\prod\limits_{l=0}^{t-1}\prod\limits_{h=0}^{r}(0-\alpha^l\beta^h)\cdot\prod\limits_{l=0}^{s-1}\prod\limits_{h=0}^{r-2}(0-\gamma^{2l+1}\alpha^h)&\\
&=\alpha^{\frac{(r+1)(t-1)t+(r-2)(r-1)s}{2}}\cdot\beta^{\frac{r(r+1)t}{2}}\cdot\gamma^{(r-1)s^2}.&
\end{aligned}
\end{equation*}
When $s$ is even, it is obvious that
$$L_{\overrightarrow{a}}(\alpha^i\beta^j)^{-1},L_{\overrightarrow{a}}(\gamma^{2i+1}\alpha^j)^{-1},L_{\overrightarrow{a}}(0)^{-1}\in QR_q.$$
By Lemma \ref{GRS}, it follows that there exists a $q$-ary $[n,\frac{n-1}{2}]$ MDS Euclidean  almost self-dual code.
\end{proof}

\begin{remark}
In Theorem \ref{thmC1}, the number of $s$ is about $\frac{r}{4}$ and the number of $t$ is $\frac{r-1}{2}$. Therefore, there exist about
$\frac{1}{8}\cdot q$ MDS Euclidean almost self-dual codes with different lengths.
\end{remark}

\section{Conclusion}

\quad\; We propose a criterion of Euclidean self-orthogonal GRS codes. Based on the criterion, we construct new MDS Euclidean (almost) self-dual codes and
new MDS Euclidean self-orthogonal codes by utilizing GRS codes. For any large square $q$, about $\frac{1}{8}\cdot q$ new MDS Euclidean (almost) self-dual
codes with different lengths can be produced. Our results contribute about $25\%$ possible lengths. But for nonsquare odd prime power $q$, the possible lengths for MDS Euclidean self-dual codes are very restricted and there are still a lot of cases to be explored.

\end{document}